\newtheorem{pro}{Proposition}
\newtheorem{cor}{Corollary}
\newtheorem{lemma}{Lemma}
\newcommand{\cprime}{\/{\mathsurround=0pt$'$}}
\DeclareMathOperator{\rank}{rank}
\let\phi=\varphi
\newcommand{\eval}[2][\right]{\relax
  \ifx#1\right\relax \left.\fi#2#1\rvert}
\date{August 
  25, 2016} \renewcommand{\abstractname}{} \title{ \protect\vspace*{-16mm} \bf
  Infinitely many nonlocal conservation laws\\ for the $ABC$ equation with
  $A+B+C\neq 0$\vspace{-3mm}}
\author{I.S. Krasil{\cprime}shchik$^{1,2}$,
  A. Sergyeyev$^3$\thanks{Corresponding author}, O.I. Morozov$^4$\\
  $^1$ Independent University of Moscow,\\ B. Vlasyevskiy per. 11, 119002
  Moscow,
  Russia\\
  $^2$ Russian State University for the Humanities\\
  Miusskaya sq.\! 6, Moscow, GSP-3, 125993, Russia\\
  $^3$ Mathematical Institute, Silesian University in Opava,\\
  Na Rybn\'\i{}\v{c}ku 1, 74601 Opava, Czech Republic\\
  $^4$ Faculty of Applied Mathematics, AGH University of Science and
  Technology,\\ Al. Mickiewicza 30, 30059 Krak\'ow, Poland\\
  E-mails: \texttt{josephkra@gmail.com}, {\tt Artur.Sergyeyev@math.slu.cz},\\
  {\tt morozov@agh.edu.pl}}
\begin{document}\maketitle
\renewcommand{\abstractname}{}
\begin{abstract}\vspace*{-16mm}
  We construct an infinite hierarchy of nonlocal conservation laws for the
  $ABC$ equation $A u_t\,u_{xy}+B u_x\,u_{ty}+C u_y\,u_{tx} = 0$, where
  $A,B,C$ are nonzero constants and $A+B+C\neq 0$, using a nonisospectral Lax
  pair.  As a byproduct, we present new coverings for the equation in
  question. The method of proof of nontriviality of the conservation laws
  under study is quite general and can be applied to many other integrable
  multidimensional systems.

{\bf Keywords:} integrable systems; conservation laws; Lax pairs.

{\bf MSC 2010:} 37K05; 37K10.\vspace{-3mm}
\end{abstract}

\section{Introduction}
Integrable systems are well known to play an important role in modern
mathematics, both pure and applied, see e.g.\ \cite{ac, b, ca, cb, c, c2, c3, d,
  g, hel1, hel2, h, o, ps} and references therein. Existence of an infinite
hierarchy of conservation laws is among the most important features of
integrable\footnote{In the present paper we mean by integrability existence of
  a nontrivial Lax pair for the system under study, cf.\ e.g. \cite{ac, d,
    v-z} and references therein for details.}  systems of partial differential
equations \cite{ac, b,d,o}. It imposes 
strong constraints on the associated dynamics making it highly
regular.\looseness=-1

While such a hierarchy of conservation laws can often be extracted from the
Lax-type representation of the system under study, cf.\ e.g.\ \cite{b,d} and
references therein, in a relatively straightforward manner, rigorous proof of
nontriviality and independence of the conservation laws in question is often a
tricky matter, especially in the case of integrable systems of partial
differential equations in more than two independent variables when the
conservation laws under study often happen to be nonlocal, see e.g.\ \cite{b, kv}.\looseness=-1

In the present paper we demonstrate how to prove nontriviality and
independence of such nonlocal conservation laws at the example of the $ABC$
equation.  The procedure presented below is based on the careful examination
of the structure of the kernel of total derivatives and is fairly readily
generalized to other multidimensional integrable systems.

Recall that the $ABC$ equation has the form
\begin{equation}
A u_t\,u_{xy}+B u_x\,u_{ty}+C u_y\,u_{tx} = 0,
\label{abc}
\end{equation}
where $A,B,C$ are arbitrary nonzero constants (if one of them vanishes,
\eqref{abc} reduces to a first-order PDE).\looseness=-1

To the best of our knowledge, equation~\eqref{abc} has first appeared in
\cite{z} in connection with the study of geometry of Veronese webs (cf.\ also \cite{kp} and references therein). In the same
paper the author has established integrability of~\eqref{abc} for the case of
$A+B+C=0$ by presenting the associated Lax pair; note that in \cite{mos}
a four-dimensional integrable generalization of~\eqref{abc} with
$A+B+C=0$ was found. Later in \cite{ms} (cf.\ also
\cite{ms2, om2, s, s2} for related results) a recursion operator for~\eqref{abc}
with $A+B+C=0$ was found, and using the method of hydrodynamic reductions it
was shown \cite{f} that~\eqref{abc} is also integrable if $A+B+C\neq 0$. Note
also that if $A=B=C\neq 0$ then~\eqref{abc} admits \cite{fkt} a Lagrangian
with the density $-A u_x u_y u_t/2$.\looseness=-1

Below we assume that $A+B+C \neq 0$ and put $B = -\kappa_1 A$,
$C = -\kappa_2 A$, so $\kappa_1+\kappa_2-1\neq 0$. Then equation~\eqref{abc}
takes the form
\begin{equation}
  u_{xy} = \frac{\kappa_1 \, u_x\,u_{ty}+\kappa_2\,u_y\,u_{tx}}{u_t}.
\label{abc_2}
\end{equation}

Integrability of~\eqref{abc_2} is an immediate consequence of the following
result which provides a nonlinear Lax pair for the equation in question.
\begin{pro}[\cite{f}]\label{p1}
The $ABC$ equation~\eqref{abc_2} has a covering defined by the system
\begin{equation}
\begin{array}{lcl}
  q_y &=& \displaystyle{u_y\,
          \frac{\kappa_1\,s\,R^{\prime}+
          (\kappa_1+\kappa_2-1)\,R}{\kappa_1\,R^{\prime}},}\\[5mm]
q_t &=& \displaystyle{u_t\,\frac{\kappa_1\,s\,R^{\prime}+
          R^2+(\kappa_1+\kappa_2-1)\,R}{\kappa_1\,R^{\prime}},}
\end{array}
\label{abc_2_covering}
\end{equation}
where $s = q_x/u_x$\textup{,} and the function $R=R(s)$ is a solution to the
ODE
\begin{equation}
  R^{\prime\prime} = \frac{(2\,(\kappa_1+\kappa_2)-1)\,
    R+(\kappa_1+\kappa_2-1)\,(2\,\kappa_1+\kappa_2-1)}
  {(\kappa_1+\kappa_2-1)\,R\,(R+\kappa_1+\kappa_2-1)}\,(R^{\prime})^2.
  \label{abc_2_ode}
\end{equation}
\end{pro}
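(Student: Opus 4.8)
The plan is to verify that the notion of covering reduces here to a compatibility (flatness) statement: the overdetermined system \eqref{abc_2_covering}, together with the defining relation $s=q_x/u_x$, should be consistent as a system for the nonlocal variable $q$ on every solution of \eqref{abc_2}. I treat $s$ as the spectral (nonlocal) variable, with $R=R(s)$ fixed by \eqref{abc_2_ode}, so that the coefficients in \eqref{abc_2_covering} are functions of $s$ alone; writing $q_x=s\,u_x$, $q_y=\alpha\,u_y$, $q_t=\beta\,u_t$ with
\[
  \alpha=s+\frac{(\kappa_1+\kappa_2-1)\,R}{\kappa_1\,R'},\qquad
  \beta=s+\frac{R\,(R+\kappa_1+\kappa_2-1)}{\kappa_1\,R'},
\]
I extend the total derivatives as $\tilde D_x=D_x+s_x\,\partial_s$, $\tilde D_y=D_y+s_y\,\partial_s$, $\tilde D_t=D_t+s_t\,\partial_s$, with $s_x,s_y,s_t$ to be determined, and impose the three cross-derivative identities $\tilde D_y q_x=\tilde D_x q_y$, $\tilde D_t q_x=\tilde D_x q_t$, $\tilde D_t q_y=\tilde D_y q_t$.

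First I would read off the induced evolution of the spectral variable from the first two identities. Since $\alpha,\beta$ depend on the jets only through $s$ (so $\tilde D_x\alpha=\alpha' s_x$, etc., with a prime denoting $d/ds$), these read
\[
  s_y\,u_x+s\,u_{xy}=\alpha'\,s_x\,u_y+\alpha\,u_{xy},\qquad
  s_t\,u_x+s\,u_{xt}=\beta'\,s_x\,u_t+\beta\,u_{xt},
\]
and solve uniquely for $s_y,s_t$ in terms of $s_x$ and the second-order jets of $u$. Substituting these into the remaining identity $\tilde D_t q_y=\tilde D_y q_t$, the structurally crucial observation is that all terms proportional to the still-undetermined $s_x$ cancel, leaving a single relation among the jets of $u$ with coefficients built from $R,R',R''$, namely
\[
  \alpha'(\beta-s)\,u_{xt}\,u_y-\beta'(\alpha-s)\,u_{xy}\,u_t+(\alpha-\beta)\,u_x\,u_{yt}=0.
\]

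The main work, and the only place where the precise form of \eqref{abc_2_ode} is needed, is to simplify $\alpha'$ and $\beta'$ by eliminating $R''$ through the ODE; I expect the clean closed forms
\[
  \alpha'=-\frac{\kappa_2\,R}{\kappa_1\,(R+\kappa_1+\kappa_2-1)},\qquad
  \beta'=-\frac{R}{\kappa_1\,(\kappa_1+\kappa_2-1)}.
\]
Using $\alpha-s=(\kappa_1+\kappa_2-1)R/(\kappa_1 R')$ and $\beta-s=R(R+\kappa_1+\kappa_2-1)/(\kappa_1 R')$ then collapses the three coefficients to $\alpha'(\beta-s)=-\kappa_2R^2/(\kappa_1^2R')$, $\beta'(\alpha-s)=-R^2/(\kappa_1^2R')$ and $\alpha-\beta=-R^2/(\kappa_1R')$, so that after clearing the common factor $R^2/(\kappa_1^2R')$ the surviving relation becomes exactly $\kappa_1\,u_x\,u_{yt}+\kappa_2\,u_y\,u_{xt}=u_t\,u_{xy}$, i.e.\ \eqref{abc_2}. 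Hence the last compatibility identity holds precisely on solutions of the $ABC$ equation, which is what defines the covering. I anticipate the principal obstacle to be purely computational, in the $R''$-elimination that forces $\alpha',\beta'$ into the above rational forms and thereby pins down the numerators appearing in \eqref{abc_2_ode}; by contrast, the cancellation of $s_x$ is the conceptual key that renders the final check finite and independent of the nonlocal direction.
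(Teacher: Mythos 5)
Your proposal is correct, and I verified its pivotal computations: eliminating $R''$ via \eqref{abc_2_ode} (writing $K=\kappa_1+\kappa_2-1$, the ODE reads $R''=\bigl((2K+1)R+K(K+\kappa_1)\bigr)(R')^2/\bigl(KR(R+K)\bigr)$) does yield exactly $\alpha'=-\kappa_2 R/(\kappa_1(R+K))$ and $\beta'=-R/(\kappa_1 K)$, the $s_x$-terms in the last cross-derivative identity cancel because both sides carry the same coefficient $\alpha'\beta'\,u_y u_t/u_x$, and the three surviving coefficients collapse to $-\kappa_2 R^2/(\kappa_1^2 R')$, $-R^2/(\kappa_1^2 R')$, $-R^2/(\kappa_1 R')$, so the residual relation is precisely $u_t u_{xy}=\kappa_1 u_x u_{ty}+\kappa_2 u_y u_{tx}$, i.e.\ \eqref{abc_2}. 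There is no route in the paper to compare against: Proposition~\ref{p1} is imported from \cite{f} without proof, so your direct compatibility verification is a genuinely self-contained addition rather than a variant of an argument given in the text. Your framing is also the right one for the covering formalism of Section~\ref{sec:diff-cover}: since only $q_y$ and $q_t$ are prescribed, the fiber coordinates are $q,q_x,q_{xx},\dots$, your ``free'' $s_x$ is just $(q_{xx}-s\,u_{xx})/u_x$, and your first two identities are the definitions of $\tilde D_y(s)$, $\tilde D_t(s)$ forced by $[\tilde D_x,\tilde D_y]=[\tilde D_x,\tilde D_t]=0$, leaving the third as the sole genuine condition. The one formality you leave tacit is that $[\tilde D_y,\tilde D_t]$ must annihilate all fiber coordinates, not just $q$; this follows at once by commuting with $\tilde D_x^k$, so it is a standard remark rather than a gap. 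A pleasant byproduct of your argument, worth stating, is the converse: demanding the compatibility for generic jets forces the closed forms of $\alpha'$ and $\beta'$ and hence derives, rather than merely uses, the ODE \eqref{abc_2_ode}.
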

To make the exposition self-contained, we give a very brief introduction to the
theory of differential coverings in Section~\ref{sec:diff-cover}; for more
details of this theory we refer the reader to \cite{b, k-v0, k-v1, kv} and
references therein; the discussion of how certain types of nonlinear coverings
are related to integrability
can be found e.g.\ in \cite{d,om1,v-z}.\looseness=-1

\noindent {\bf Remark 1.}  Equation~\eqref{abc_2_ode} is integrable by
quadratures. Its general solution is of the form $R(s)=\Omega(c_1\,s+c_2)$,
where $\Omega$ is the inverse function for the function
\[
\omega(z) = \int z^{-(2\kappa_1+\kappa_2-1)/(\kappa_1+\kappa_2-1)}
\,(z+\kappa_1+\kappa_2-1)^{-\kappa_2/(\kappa_1+\kappa_2-1)} d z
\]
and $c_1$, $c_2$ are arbitrary constants.

Most importantly, (\ref{abc_2_covering}) gives rise\footnote{See e.g.\
  \cite[Subsection 10.3.3]{d} and \cite{fk} and references therein for the
  general construction leading from a covering of the type
  (\ref{abc_2_covering}) to a Lax pair of the type (\ref{abc_noniso_lax-0}),
  and \cite{d,man} and references therein for nonisospectral Lax pairs in
  general.} to a linear nonisospectral Lax pair
for~\eqref{abc_2}:\looseness=-1
\begin{cor}\label{int} 
  The $ABC$ equation~\eqref{abc_2}
  admits a linear nonisospectral Lax pair of the form
\begin{equation}\label{abc_noniso_lax-0}
\begin{array}{rcl}
\Phi_y&=&(H_1)_{\zeta}\Phi_x-(H_1)_x\Phi_\zeta,\\[3mm]
\Phi_t&=&(H_2)_{\zeta}\Phi_x-(H_2)_x\Phi_\zeta,
\end{array}
\end{equation}
where
\[
H_1=\left(\displaystyle{u_y\, \frac{\kappa_1\,s\,R^{\prime}+
      (\kappa_1+\kappa_2-1)\,R}{\kappa_1\,R^{\prime}}}\right)_{s=\zeta/u_x},\quad
H_2=\left(\displaystyle{u_t\,\frac{\kappa_1\,s\,R^{\prime}+
      R^2+(\kappa_1+\kappa_2-1)\,R}{\kappa_1\,R^{\prime}}}\right)_{s=\zeta/u_x}.
\]
\end{cor}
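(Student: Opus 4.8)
The plan is to reduce the compatibility (zero-curvature) condition of the linear system \eqref{abc_noniso_lax-0} to the compatibility condition of the nonlinear covering \eqref{abc_2_covering}, which holds modulo \eqref{abc_2} by Proposition~\ref{p1}, and then to observe that the two conditions are literally the same expression, read at $\zeta=q_x$ in one case and for arbitrary $\zeta$ in the other. The starting point is to rewrite \eqref{abc_noniso_lax-0} as $\Phi_y=X_{H_1}\Phi$ and $\Phi_t=X_{H_2}\Phi$, where $X_{H_i}=(H_i)_\zeta\,\partial_x-(H_i)_x\,\partial_\zeta$ is the Hamiltonian vector field on the $(x,\zeta)$-plane generated by $H_i$. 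Here $(H_i)_x$ denotes the partial derivative in $x$ with $y,t,\zeta$ held fixed, so that it differentiates the jet variables of $u$ via $\partial_x u_x=u_{xx}$, $\partial_x u_y=u_{xy}$, and so on; keeping this partial-versus-total bookkeeping straight is essential throughout.

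First I would compute the cross-derivative. Writing $H_{1,t}:=\partial_t H_1$ and $H_{2,y}:=\partial_y H_2$ for the derivatives acting only on the $u$-jet coefficients, a direct expansion gives
\[
\partial_t\Phi_y-\partial_y\Phi_t=X_{H_{1,t}}\Phi-X_{H_{2,y}}\Phi+[X_{H_1},X_{H_2}]\Phi,
\]
and the standard identity $[X_{H_1},X_{H_2}]=X_{\{H_1,H_2\}}$, with $\{H_1,H_2\}=(H_1)_\zeta(H_2)_x-(H_1)_x(H_2)_\zeta$, collapses the right-hand side to $X_G\Phi$, where $G:=H_{1,t}-H_{2,y}+\{H_1,H_2\}$. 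The second-order terms in $\Phi$ cancel automatically, and this cancellation is the algebraic heart of the reduction. Consequently \eqref{abc_noniso_lax-0} is compatible exactly when $X_G=0$, i.e.\ when $G_x=G_\zeta=0$.

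Next I would link $G$ to the covering. By inspection $H_1,H_2$ are precisely the right-hand sides of \eqref{abc_2_covering} with $s=q_x/u_x$ replaced by $s=\zeta/u_x$, so that on the covering $q_y=H_1|_{\zeta=q_x}$ and $q_t=H_2|_{\zeta=q_x}$. Computing the covering compatibility $D_t q_y-D_y q_t$ by total differentiation, substituting $D_t q_x=D_x q_t$ and $D_y q_x=D_x q_y$, and noting that the terms proportional to $q_{xx}$ cancel for the same structural reason as above, I obtain the identity
\[
D_t q_y-D_y q_t=G|_{\zeta=q_x}.
\]
Thus the nonlinear and the linear compatibility conditions are one and the same function, evaluated at $\zeta=q_x$ in the first case and regarded as a function of the free variable $\zeta$ in the second.

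The hard part, and the conceptual crux, is the passage from the vanishing of $G$ at $\zeta=q_x$ to its vanishing for all $\zeta$. Since \eqref{abc_2_covering} is a covering, Proposition~\ref{p1} guarantees that $D_t q_y-D_y q_t=0$ holds modulo \eqref{abc_2} \emph{identically} in the fibre coordinate $q_x$, which is free. Because $\zeta$ is merely a renaming of $q_x$, this forces $G\equiv0$ as a function of $\zeta$ whenever $u$ solves \eqref{abc_2}; in particular $G_x=G_\zeta=0$, so $X_G=0$ and \eqref{abc_noniso_lax-0} is compatible. For the converse I would check that $D_t q_y-D_y q_t$ is a nonvanishing multiple of the left-hand side of \eqref{abc_2} — this relies on the genuine $\zeta$-dependence entering through $R(\zeta/u_x)$ and on the nondegeneracy $R'\neq0$ — so that $G\equiv0$ in $\zeta$ returns \eqref{abc_2}. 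I expect this nondegeneracy check, together with the partial-versus-total derivative bookkeeping, to be the only place that demands real care.
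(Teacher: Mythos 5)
Your proof is correct and is essentially the paper's own route: the paper states Corollary~\ref{int} as an immediate consequence of Proposition~\ref{p1} via the standard covering-to-Lax-pair construction cited in its footnote (see \cite[Subsection 10.3.3]{d} and \cite{fk}), which is exactly your argument --- replace the free fibre coordinate $q_x$ of \eqref{abc_2_covering} by the spectral parameter $\zeta$, observe that the $q_{xx}$-terms (respectively the second-order terms in $\Phi$) cancel, and use the fact that the covering's compatibility condition holds identically in $q_x$ to conclude $G\equiv 0$ in $\zeta$ and hence $X_G=0$. Your spelled-out version, including the identity $[X_{H_1},X_{H_2}]=X_{\{H_1,H_2\}}$, fills in the details the paper delegates to the references; the final converse paragraph is a useful sanity check but is not needed for the corollary as stated.
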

The coefficients of~\eqref{abc_noniso_lax-0} depend on the variable spectral
parameter $\zeta$ in a highly nontrivial fashion thanks to the presence of the
function $R$, so our first order of business is to construct a Lax
representation with simpler dependence on the parameter. This is done in
Section~\ref{nc}, where we present a new covering for~\eqref{abc_2} and the
associated nonisospectral Lax pair~\eqref{abc_2_noniso} and show how it is
related to (\ref{abc_noniso_lax-0}).  Finally, in Section~\ref{cl} we
construct an infinite hierarchy of nonlocal conservation laws
for~\eqref{abc_2} and prove their nontriviality.  Note that the method of the
proof is quite general and can be applied to many other integrable
multidimensional systems.

\section{Differential coverings}
\label{sec:diff-cover}

We briefly review here the theory of differential coverings over infinitely
prolonged differential equations. The reader can find further details and examples in~\cite{b, k-v0, k-v1, kv}.

Let~$M$ be a smooth manifold, $\dim M=n$, and~$\pi\colon E\to M$ be a vector
bundle, $\rank\pi=m$. We consider the bundles of
$k$-jets~$\pi_k\colon J^k(\pi) \to M$, $k\geq0$, together with the natural
projections $\pi_{k+1,k}\colon J^{k+1}(\pi)\to J^k(\pi)$. Then the manifold of
infinite jets~$J^\infty(\pi)$ is defined as the inverse limit with respect to
these projections and the bundles $\pi_\infty\colon J^\infty(\pi)\to M$ and
$\pi_{\infty,k}\colon J^\infty(\pi)\to J^k(\pi)$ are defined as well. For any
section~$s\colon M\to E$ of the bundle~$\pi$ its infinite
jet~$j_\infty(s)\colon M\to J^\infty(\pi)$ is a section of~$\pi_\infty$. One
has the embeddings
$\pi_{k+1,k}^*\colon C^\infty(J^k(\pi))\to C^\infty(J^{k+1}(\pi))$, and we
define the algebra of smooth functions on~$J^\infty(\pi)$ as
$\mathcal{F}(\pi)=\cup_{k\geq0}C^\infty(J^k(\pi))$.

The main geometric structure on~$J^\infty(\pi)$ is the Cartan
distribution~$\mathcal{C}$: for any point~$\theta\in J^\infty(\pi)$ we define
the Cartan plane~$\mathcal{C}_\theta$ as the tangent plane to the graph of an
infinite jet passing through this point. This distribution is formally
integrable: if~$X$ and~$Y$ are vector fields lying in~$\mathcal{C}$ then the
commutator~$[X,Y]$ lies there as well. Every Cartan plane~$\mathcal{C}_\theta$
is $n$-dimensional and projects isomorphically to~$T_{\pi_\infty(\theta)}M$ by
the differential of~$\pi_\infty$. Due to this, any vector field~$Z$ on~$M$ can
be uniquely lifted to a vector field~$\mathcal{C}_X$ on~$J^\infty(\pi)$. The
correspondence~$X\mapsto \mathcal{C}_X$ is $C^\infty(M)$-linear and preserves
the commutator. In addition, $\pi_{\infty,*}\left(\mathcal{C}_X\right) = X$.
In other words, we have a connection which is called the Cartan connection. In the
standard local coordinates $x^1,\dots,x^n$, $u_\sigma^1,\dots,u_\sigma^m$
in~$J^\infty(\pi)$, $\sigma$ being symmetric multi-index consisting of the
integers~$1,\dots,n$, the Cartan connection is determined by the
correspondence
\begin{equation}\label{eq:5}
  \mathcal{C}\colon\frac{\partial}{\partial x^i}\mapsto D_{x^i} =
  \frac{\partial}{\partial x^i} + \sum_{\sigma,j}u_{\sigma
    i}^j\frac{\partial}{\partial u_\sigma^j},
\end{equation}
where the fields~$D_{x^i}$ are called the total derivatives. Differential
operators in total derivatives are called $\mathcal{C}$-differential
operators.

A differential equation\footnote{By a slight abuse of terminology, we speak of a differential equation even though it could actually be a system of differential equations.} of order~$k$ is a submanifold in~$J^k(\pi)$. Locally,
it can be given by the conditions~$F^1=\dots=F^r=0$, where~$F^j$ are smooth
functions on~$J^k(\pi)$. Its infinite prolongation is the submanifold
(probably, with singularities)~$\mathcal{E}$ in~$J^\infty(\pi)$ satisfying
the conditions~$(D_{x^{i_1}}\circ\dots\circ D_{x^{i_s}})(F^j)=0$ for
all~$j=1,\dots,r$, $s\geq 0$, and~$1\leq i_1,\dots,i_s\leq n$. The Cartan
connection can be restricted from the bundle~$\pi_\infty$ to its
subbundle~$\eval{\pi_\infty}_{\mathcal{E}}\colon\mathcal{E}\to M$ and
hence any $\mathcal{C}$-differential operator restricts
from~$J^\infty(\pi)$ to~$M$. On the other hand, the
correspondence~\eqref{eq:5} allows one to lift any linear differential
operator on~$M$ to a $\mathcal{C}$-differential operator on~$\mathcal{E}$. We
always assume that~$\mathcal{E}$ is differentially connected which means that
the only solutions of the system~$D_{x^i}(f)=0$, $i=1,\dots,n$,
on~$\mathcal{E}$ are constants.

In particular, let~$\ell_{\mathcal{E}}$ denote the restriction of the
linearization operator
\begin{equation*}
  \ell_F =
  \begin{pmatrix}
    \sum_\sigma\dfrac{\partial F^\alpha}{\partial u_\sigma^\beta}D_\sigma
  \end{pmatrix},\qquad
  \alpha=1,\dots,r,\quad \beta=1,\dots,m,
\end{equation*}
to~$\mathcal{E}$. Then the solutions of the
equation~$\ell_{\mathcal{E}}(\phi) =0$ are identified with symmetries
of~$\mathcal{E}$, i.e., with the evolutionary vector fields
$\sum_{j,\sigma} D_\sigma(\phi^j)\partial/\partial u_\sigma^j$ that preserve
the Cartan distribution on~$\mathcal{E}$. Dually, the solutions
of~$\ell_{\mathcal{E}}^*(\psi)=0$ are called cosymmetries,
where~$\ell_{\mathcal{E}}^*$ is formally adjoint to~$\ell_{\mathcal{E}}$. The
lift~$d_h$ of the de~Rham differential gives rise to the horizontal de~Rham complex
on~$\mathcal{E}$; $d_h$-closed $(n-1)$-forms are conservation laws
of~$\mathcal{E}$ and $d_h$-exact forms are trivial conservation laws. To any
conservation law~$\omega$ one can associate its generating
section (or the characteristic)~$\psi_\omega$ which is a cosymmetry.

A morphism of equations is a smooth
map~$\tau\colon\tilde{\mathcal{E}} \to \mathcal{E}$ which takes the Cartan
distribution~$\tilde{\mathcal{C}}$ on~$\tilde{\mathcal{E}}$ to that
on~$\mathcal{E}$. A morphism~$\tau$ is a (differential) covering if its
differential maps the Cartan plane~$\mathcal{C}_{\tilde{\theta}}$
to~$\mathcal{C}_{\tau(\tilde{\theta})}$ isomorphically for
any~$\tilde{\theta}\in\tilde{\mathcal{E}}$. In other words, for any vector
field~$Z$ on~$M$ the field~$\tilde{\mathcal{C}}_Z$ projects
to~$\mathcal{C}_Z$. Locally, this means that the total derivatives
on~$\tilde{\mathcal{E}}$ are
\begin{equation}\label{eq:6}
  \tilde{D}_{x^i} = D_{x^i} + X_i,\qquad i=1,\dots,n,
\end{equation}
where~$D_{x^i}$ are the total derivatives on~$\mathcal{E}$, and
\begin{equation*}
  D_{x^i}(X_j)-D_{x^j}(X_i) + [X_i,X_j]=0,\qquad 1\leq i<j\leq n,
\end{equation*}
$X_i$ being $\tau$-vertical vector fields on~$\tilde{\mathcal{E}}$. A
covering~$\tau\colon \tilde{\mathcal{E}} \to \mathcal{E}$ over a
differentially connected equation is called irreducible if the covering
equation~$\tilde{\mathcal{E}}$ is differentially connected as well.

Symmetries, cosymmetries, conservation laws of the covering
equation~$\tilde{\mathcal{E}}$ are nonlocal symmetries, etc.,
of~$\mathcal{E}$. Local objects depend on formal solutions of~$\mathcal{E}$ and
their partial derivatives; roughly speaking, nonlocal ones may depend on integrals of these
solutions\footnote{Or, more generally, on solutions of some differential
  equations whose coefficients depend on formal solutions
  of~$\mathcal{E}$.}.

For example, the relations
\begin{equation}\label{cov-burgers}
  \tilde{D}_x=D_x+\frac{wu}{2},\qquad
  \tilde{D}_t=D_t+\frac{w}{2}\left(\frac{u^2}{2} + u_x\right)
\end{equation}
define a covering of the Burgers equation~$\mathcal{E}=\{u_t=uu_x+u_{xx}\}$ by
the heat equation~$\tilde{\mathcal{E}} = \{w_t=w_{xx}\}$.

Thus, $w$ is related to $u$ by the formulas
\begin{equation}\label{w}
\displaystyle w_x=\frac{wu}{2},\quad w_t=\frac{w}{2}\left(\frac{u^2}{2} + u_x\right).
\end{equation}
Note that system (\ref{w}) is compatible by virtue of the Burgers equation.

The
form~$\omega=w\,dx+w_x\,dt$ is a local conservation law
of~$\tilde{\mathcal{E}}$, and its pullback to $\mathcal{E}$ gives a nonlocal conservation law for
$\mathcal{E}$. The corresponding nonlocal conserved density on~$\mathcal{E}$, i.e., $w$, defined by (\ref{w}),
can be informally thought of as $\int\exp(\frac{u}{2})\,dx$.\looseness=-1

Going back to the general theory, note that any $\mathcal{C}$-differential
operator~$\Delta$ on~$\mathcal{E}$ can be lifted to a
$\mathcal{C}$-differential operator~$\tilde{\Delta}$ on~$\tilde{\mathcal{E}}$
using equations~\eqref{eq:6}. In particular, this can be done with the
linearization operator~$\ell_{\mathcal{E}}$ and its adjoint. Solutions of the
equations
\begin{equation*}
  \tilde{\ell}_{\mathcal{E}}(\phi)=0,\qquad \tilde{\ell}_{\mathcal{E}}^*(\psi)=0
\end{equation*}
are called nonlocal shadows of symmetries and cosymmetries, respectively.

We employ the theory of coverings to establish nontriviality and independence of
nonlocal conservation laws constructed in Section~\ref{cl}.

\section{New coverings and nonisospectral Lax pair}\label{nc}
We can readily write down a new covering for~\eqref{abc_2} expressed solely in
terms of the variable $r=R(s)$:

\begin{pro}\label{p2} The $ABC$ equation~\eqref{abc_2} has a covering defined by
  the system
\begin{equation}
    \begin{array}{lcl}
      r_t &=& \displaystyle{\frac{r((\kappa_1 + \kappa_2 - 1)(r + \kappa_1 +
              \kappa_2 - 1) u_{tx} - u_t r_x)}{u_x \kappa_1(\kappa_1 + \kappa_2 -
              1)},}
      \\
          &&
      \\
      r_y &=& \displaystyle{\frac{r((\kappa_1 + \kappa_2 - 1)(r + \kappa_1 +
              \kappa_2 - 1) u_{xy} - \kappa_2 u_y r_x)}{u_x \kappa_1(r +
              \kappa_1 +
              \kappa_2 - 1)}}.
    \end{array}
  \label{abc_2_r_covering}
\end{equation}
\end{pro}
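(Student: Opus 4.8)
The plan is to \emph{derive} the covering \eqref{abc_2_r_covering} from the already established covering \eqref{abc_2_covering} of Proposition~\ref{p1} by the change of nonlocal variable $r=R(s)$ with $s=q_x/u_x$, rather than to verify the covering conditions \eqref{eq:6} from scratch. The point is that the flows $q_y,q_t$ in \eqref{abc_2_covering} depend on the nonlocal variable $q$ only through $s=q_x/u_x$ (and the $u$-jets), so the nonlocal coordinates $q_x,q_{xx},\dots$, equivalently $s,s_x,\dots$ and in turn $r,r_x,\dots$, already generate a sub-covering. Since $R'\neq0$ for a generic solution of \eqref{abc_2_ode}, the assignment $r=R(s)$, $r_x=R'(s)\,s_x,\dots$ is an invertible reparametrization of this sub-covering; hence the transformed system is automatically a covering, and it only remains to check that the transformation produces exactly \eqref{abc_2_r_covering}.

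Concretely, I would compute $r_t=\tilde D_t r=R'(s)\,\tilde D_t s$ and $r_y=\tilde D_y r=R'(s)\,\tilde D_y s$ by the chain rule, where $\tilde D_t,\tilde D_y$ are the total derivatives of the covering \eqref{abc_2_covering}. Writing $s=q_x/u_x$ gives $\tilde D_t s=(\tilde D_t q_x)/u_x-s\,u_{tx}/u_x$ and similarly for $y$; the compatibility of \eqref{abc_2_covering} lets me replace $\tilde D_t q_x=\tilde D_x q_t$ and $\tilde D_y q_x=\tilde D_x q_y$, so that I only need $D_x$ of the explicit right-hand sides of \eqref{abc_2_covering}. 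Carrying out this $x$-differentiation introduces $R''=R''(s)$, which I would eliminate using the ODE \eqref{abc_2_ode}; the remaining explicit occurrences of $s$ are then traded for $r=R(s)$, while $s_x$ is absorbed into $r_x=R'(s)\,s_x$. After clearing denominators and substituting the $ABC$ equation \eqref{abc_2} for $u_{xy}$ where it appears, the resulting expressions should collapse to the right-hand sides of \eqref{abc_2_r_covering}.

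The main obstacle is precisely this algebraic reduction: I must use the specific form of \eqref{abc_2_ode} to cancel every $R''$ and every bare $s$, and the rational structure of that ODE is exactly what makes the factors $(\kappa_1+\kappa_2-1)$ and $(r+\kappa_1+\kappa_2-1)$ appear in the denominators of \eqref{abc_2_r_covering} (in particular the $(r+\kappa_1+\kappa_2-1)$ in the $r_y$-equation is a genuine consistency check on the computation). As an independent confirmation one can bypass Proposition~\ref{p1} entirely and verify the covering directly: treating $r_x$ as a free nonlocal coordinate, so that $\tilde D_x$ acts as a shift $r\mapsto r_x\mapsto\dots$, the only nontrivial compatibility condition is $\tilde D_y(r_t)=\tilde D_t(r_y)$ modulo \eqref{abc_2}, which can be checked by a direct, if lengthier, computation. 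I expect the derivation from Proposition~\ref{p1} to be the cleaner route, with the ODE-based elimination of $R''$ and $s$ being the step that requires the most care.
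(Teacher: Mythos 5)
Your proposal is correct and matches the paper's (implicit) route: the paper introduces Proposition~\ref{p2} precisely as the covering \eqref{abc_2_covering} rewritten ``solely in terms of the variable $r=R(s)$'', i.e.\ by the very substitution $r=R(s)$, $r_x=R'(s)s_x$ with $R''$ eliminated via \eqref{abc_2_ode} that you describe. Indeed, writing $q_t=u_tF(s)$ and $q_y=u_yG(s)$, the ODE yields $F'=-R/(\kappa_1(\kappa_1+\kappa_2-1))$ and $G'=-\kappa_2 R/(\kappa_1(R+\kappa_1+\kappa_2-1))$ while $R'(F-s)=R(R+\kappa_1+\kappa_2-1)/\kappa_1$ and $R'(G-s)=(\kappa_1+\kappa_2-1)R/\kappa_1$, so every bare $s$ and $R'$ cancels exactly as you anticipate and \eqref{abc_2_r_covering} follows.
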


Recall that a symmetry shadow (resp.\ a cosymmetry) for~\eqref{abc_2} is a
solution of linearized (resp.\ adjoint linearized) version of~\eqref{abc_2},
see Section~\ref{sec:diff-cover} and~\cite{b, k-v0, k-v1, kv} for further
details. 


\begin{pro}\label{p3} The $ABC$ equation~\eqref{abc_2} has a shadow of nonlocal
  symmetry in the covering~\eqref{abc_2_r_covering} with the characteristic
  \begin{equation}
    U=\int \displaystyle
    (r+\kappa_1+\kappa_2-1)^{\frac{(1-\kappa_1)}{(\kappa_1+\kappa_2-1)}}
    r^{\frac{(1-\kappa_2)}{(\kappa_1+\kappa_2-1)}} dr.
    \label{abc_2_r_covering_shadow}
  \end{equation}
\end{pro}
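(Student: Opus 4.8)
The plan is to verify directly that the function $U$, regarded as a function $U=U(r)$ of the nonlocal variable $r$ of the covering~\eqref{abc_2_r_covering} (so that $U'(r)$ is the integrand in~\eqref{abc_2_r_covering_shadow}), satisfies the lifted linearization of the $ABC$ equation, i.e.\ $\tilde\ell_{\mathcal E}(U)=0$. Writing~\eqref{abc_2} in the polynomial form $F:=u_t u_{xy}-\kappa_1 u_x u_{ty}-\kappa_2 u_y u_{tx}=0$, its linearization is
\begin{equation*}
\ell_F=u_t D_xD_y+u_{xy}D_t-\kappa_1 u_x D_tD_y-\kappa_1 u_{ty}D_x-\kappa_2 u_y D_tD_x-\kappa_2 u_{tx}D_y,
\end{equation*}
and it suffices to show that the lift of $\ell_F$ (with the $D$'s replaced by the total derivatives $\tilde D_x,\tilde D_y,\tilde D_t$ of the covering) annihilates $U$ modulo $F=0$ and its differential consequences. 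Set $\lambda=\kappa_1+\kappa_2-1$, so that $U'(r)=(r+\lambda)^{(1-\kappa_1)/\lambda}\,r^{(1-\kappa_2)/\lambda}$ and $U$ depends on the jets of $u$ only through $r$.

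First I would reduce every term of $\ell_F(U)$ to an expression in $r$, the basic nonlocal variable $r_x=\tilde D_x r$, and the jets of $u$. By the chain rule $\tilde D_\alpha U=U'(r)\,\tilde D_\alpha r$, where $\tilde D_t r$ and $\tilde D_y r$ are read off from~\eqref{abc_2_r_covering}. For the second-order terms I would exploit commutativity of the lifted total derivatives, computing $\tilde D_t r_x=\tilde D_x r_t$, $\tilde D_t r_y=\tilde D_y r_t$ and $\tilde D_x r_y$ by prolonging the relations~\eqref{abc_2_r_covering}. Collecting everything, $\ell_F(U)$ takes the form $C_2\,U''(r)+C_1\,U'(r)$, where $C_2$ and $C_1$ are rational in $r$ and polynomial in $r_x$, $u_x$, $u_t$, $u_y$, and the second- and third-order jets of $u$.

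The verification then splits into two parts. First, the contributions containing the higher nonlocal derivative $r_{xx}=\tilde D_x r_x$ must cancel identically, since $r_{xx}$ is an independent nonlocal coordinate; this term enters through $\tilde D_x r_y$, through $\tilde D_x r_t=\tilde D_t r_x$, and again (via $\tilde D_y r_t=\tilde D_t r_y$, as $r_t$ contains $r_x$) through $\tilde D_x r_y$, and a short computation shows that their combined coefficient carries the factor $-\lambda+(r+\lambda)-r=0$, so it indeed drops out; the spurious third-order jets of $u$ likewise either cancel or are eliminated using the differential consequences of $F=0$. Second, in the surviving part one substitutes
\begin{equation*}
U''(r)=\left(\frac{1-\kappa_1}{\lambda\,(r+\lambda)}+\frac{1-\kappa_2}{\lambda\,r}\right)U'(r),
\end{equation*}
which is exactly the logarithmic derivative of the explicit $U'(r)$ above, and then checks that the resulting overall coefficient of $U'(r)$ vanishes modulo $F=0$ and its differential consequences.

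The main obstacle is the bookkeeping in these cancellations rather than any conceptual difficulty: one must keep careful track of the proliferating third-order jets of $u$ and of the nonlocal term $r_{xx}$, and confirm that everything not intrinsic to $U(r)$ disappears, leaving precisely a multiple of the first-order linear ODE satisfied by $U'$. It is worth stressing that the two exponents in~\eqref{abc_2_r_covering_shadow} are fixed exactly by the requirement that the coefficient of $U'(r)$ reduce to zero on $\mathcal E$; this both explains the particular form of the integrand and constitutes the heart of the proof.
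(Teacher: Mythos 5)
Your proposal is correct and coincides with the approach implicit in the paper: Proposition~\ref{p3} is stated there without an explicit proof, the intended argument being exactly the routine direct verification that the lift of $\ell_F$ to the covering \eqref{abc_2_r_covering} annihilates $U(r)$ modulo the equation, which is what you carry out. Your structural claims all check out upon computation: the $r_{xx}$-contributions (entering through $\tilde D_x r_y$, $\tilde D_x r_t$ and, via $\tilde D_y r_t$, again through $\tilde D_x r_y$) cancel with the overall factor $-(\kappa_1+\kappa_2-1)+(r+\kappa_1+\kappa_2-1)-r=0$ as you assert, the third-order jets $u_{xxy}$, $u_{txy}$, $u_{txx}$ collapse via $u_{xy}=\Upsilon$ and its differential consequences (with the $u_{xx}$-terms cancelling outright), and after substituting the logarithmic derivative $U''/U'=\bigl(\tfrac{1-\kappa_1}{(\kappa_1+\kappa_2-1)(r+\kappa_1+\kappa_2-1)}+\tfrac{1-\kappa_2}{(\kappa_1+\kappa_2-1)r}\bigr)$ the coefficients of the remaining independent monomials $u_{ty}r_x$, $u_y u_{tx}r_x/u_x$ and $u_{tx}u_{xy}/u_x$ each vanish identically on $\mathcal{E}$, confirming both the computation and your observation that the two exponents in \eqref{abc_2_r_covering_shadow} are forced by these cancellations.
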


\begin{pro}\label{p4} The $ABC$ equation~\eqref{abc_2} has a nonlocal cosymmetry
  with the characteristic
  \begin{equation}
    \gamma=u_t \int \displaystyle
    (r+\kappa_1+\kappa_2-1)^{\frac{-2\kappa_2}{(\kappa_1+\kappa_2-1)}}
    r^{\frac{-2\kappa_1}{(\kappa_1+\kappa_2-1)}} dr.
    \label{abc_2_r_covering_cosymm}
  \end{equation}
\end{pro}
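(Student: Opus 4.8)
The plan is to verify directly, using the definitions from Section~\ref{sec:diff-cover}, that $\gamma$ is annihilated by the adjoint linearization of~\eqref{abc_2} lifted to the covering~\eqref{abc_2_r_covering}, i.e., that $\tilde{\ell}^*_{\mathcal{E}}(\gamma)=0$ on the covering equation $\tilde{\mathcal{E}}$. First I would write $\ell^*_{\mathcal{E}}$ explicitly. Clearing the denominator in~\eqref{abc_2}, set $F=u_t\,u_{xy}-\kappa_1\,u_x\,u_{ty}-\kappa_2\,u_y\,u_{tx}$, so that the adjoint linearization reads
\[
\ell^*_F(\psi)=-D_t(u_{xy}\psi)+D_xD_y(u_t\psi)+\kappa_1D_x(u_{ty}\psi)-\kappa_1D_tD_y(u_x\psi)+\kappa_2D_y(u_{tx}\psi)-\kappa_2D_tD_x(u_y\psi).
\]
Since passing from~\eqref{abc_2} to $F$ multiplies the equation by $u_t$, one has $\ell^*_F(\psi)=\ell^*_{\mathcal{E}}(u_t\psi)$ on shell, so the cosymmetry $\gamma=u_t\,G(r)$ of~\eqref{abc_2} corresponds to the simpler characteristic $G(r)$ for $F$, where $G(r)=\int(r+\kappa_1+\kappa_2-1)^{-2\kappa_2/(\kappa_1+\kappa_2-1)}\,r^{-2\kappa_1/(\kappa_1+\kappa_2-1)}\,dr$. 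I would therefore test $\psi=G(r)$ against the operator $\ell^*_F$ lifted to the covering by replacing $D_x,D_y,D_t$ with the total derivatives $\tilde{D}_x,\tilde{D}_y,\tilde{D}_t$ determined by~\eqref{abc_2_r_covering}.

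Next I would expand $\tilde{\ell}^*_F(G(r))$ by the Leibniz rule and organize the result according to how many times $G$ is differentiated, obtaining $A\,G+B\,G'+C\,G''$, where $A,B,C$ are local in the $u$-jets and rational in $r,r_x,r_{xx},\dots$. Because $G$ and $G'$ are functionally independent while $G''$ is a rational multiple of $G'$, the coefficients $A$ and $B+C\,(G''/G')$ must vanish separately on $\tilde{\mathcal{E}}$. The coefficient $A$ of the undifferentiated $G$ is exactly $\ell^*_F(1)$, and this vanishes identically: every term in $\ell^*_F(1)$ is a multiple of $u_{txy}$, and the coefficients $-1+1+\kappa_1-\kappa_1+\kappa_2-\kappa_2$ sum to zero. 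This disposes of the purely local part at once; equivalently, it reflects the fact that $u_t$ is a (local) cosymmetry of~\eqref{abc_2}.

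The remaining task, and the main obstacle, is to show that $B+C\,(G''/G')$ vanishes. Here I would use the logarithmic-derivative identity
\[
\frac{G''}{G'}=-\frac{2\kappa_2}{(\kappa_1+\kappa_2-1)(r+\kappa_1+\kappa_2-1)}-\frac{2\kappa_1}{(\kappa_1+\kappa_2-1)\,r}
\]
to eliminate $G''$ in favour of $G'$, substitute for $r_y$ and $r_t$ and, after one differentiation, for the mixed derivatives $r_{xy}$, $r_{tx}$, $r_{ty}$ from~\eqref{abc_2_r_covering}, and then reduce everything modulo $F$ and its differential consequences. The consistency of the covering guaranteed by Proposition~\ref{p2} ensures that $r_{xy}$, $r_{tx}$, $r_{ty}$ are well defined (so that $\tilde{D}_x\tilde{D}_y r=r_{xy}$, etc.), whence the outcome is a genuine identity rather than an expression requiring further reduction.

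The bulk of the argument is then algebraic bookkeeping: collecting the coefficient of $G'$ according to the independent jet variables that appear — the various powers of $r$, the powers of the free nonlocal coordinate $r_x$, the quantity $r_{xx}$, and the second-order $u$-derivatives — and checking that each such coefficient is zero. I expect this collection step, rather than any conceptual difficulty, to be the most laborious part; it is precisely the matching of the two exponents in the integrand of~\eqref{abc_2_r_covering_cosymm} (through the identity above) that forces every one of these coefficients to cancel.
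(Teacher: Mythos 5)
Your proposal is correct and takes essentially the same route as the paper, which states Proposition~\ref{p4} without an explicit proof as a fact established by direct computation in the covering~\eqref{abc_2_r_covering}: verifying that the lifted adjoint linearization annihilates~$\gamma$ is precisely that computation, and your organization of it is sound — passing to $F=u_t\,u_{xy}-\kappa_1 u_x u_{ty}-\kappa_2 u_y u_{tx}$ via $\ell^*_F(\psi)=\ell^*_{\mathcal{E}}(u_t\psi)$ on shell, observing $\ell^*_F(1)=0$ identically, and splitting the result into the coefficients of $G$ and $G'$ (with $G''$ eliminated through the logarithmic-derivative identity) before collecting in the independent fiber variables $r,r_x,r_{xx},\dots$ of the covering. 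No gaps.
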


It can be shown that the shadow~\eqref{abc_2_r_covering_shadow} cannot be
lifted to a nonlocal symmetry for the $ABC$ equation~\eqref{abc_2} in the
covering~\eqref{abc_2_r_covering}.\looseness=-1

Now pass from~\eqref{abc_2_r_covering} to a slightly different (but
equivalent) covering by putting $w=r
u_x^{-(\kappa_1+\kappa_2-1)/\kappa_1}$. Then
we have
\begin{equation}
    \begin{array}{lcl}
      w_t &=& \displaystyle{-\frac{u_x^{(\kappa_2-\kappa_1-1)/\kappa_1}w
              (\kappa_1 u_x u_t w_x+(\kappa_1+\kappa_2-1)w (u_t
              u_{xx}-\kappa_1 u_x u_{xt}))}{\kappa_1^2(\kappa_1+\kappa_2-1)},}
      \\
          &&
      \\
      w_y &=& \displaystyle{-\frac{\kappa_2
              u_x^{(\kappa_2-\kappa_1-1)/\kappa_1} w u_y(\kappa_1 u_x w_x+
              (\kappa_1+\kappa_2-1)w u_{xx})}{\kappa_1^2
              (u_x^{(\kappa_1+\kappa_2-1)/\kappa_1}w+\kappa_1+\kappa_2-1)}}.
    \end{array}
  \label{abc_2_w_covering}
\end{equation}

Moreover, a similar change of variables in (\ref{abc_noniso_lax-0})
leads to a significantly simpler Lax pair for (\ref{abc_2}):
\begin{pro}\label{lax}
  The $ABC$ equation~\eqref{abc_2} admits a nonisospectral Lax pair of the form
  \begin{equation}
      \begin{array}{lcl}
        \Psi_y &=& \displaystyle{\frac{\kappa_2 \lambda
                   u_x^{(\kappa_2-\kappa_1-1)/\kappa_1}  u_y(-\kappa_1 u_x
                   \Psi_x+ (\kappa_1+\kappa_2-1)\lambda
                   u_{xx}\Psi_\lambda)}{\kappa_1^2
                   (u_x^{(\kappa_1+\kappa_2-1)/\kappa_1}\lambda+
                   \kappa_1+\kappa_2-1)},}
        \\
               &&
        \\
        \Psi_t &=& \displaystyle{\frac{\lambda
                   u_x^{(\kappa_2-\kappa_1-1)/\kappa_1} (-\kappa_1 u_x u_t
                   \Psi_x+(\kappa_1+\kappa_2-1)\lambda (u_t u_{xx}-\kappa_1
                   u_x
                   u_{xt})\Psi_\lambda)}{\kappa_1^2(\kappa_1+\kappa_2-1)},}
      \end{array}
    \label{abc_2_noniso}
  \end{equation}
  which is related to~\eqref{abc_noniso_lax-0} by the transformation
  $\Phi=\Psi$, $\lambda=u_x^{-(\kappa_1+\kappa_2-1)/\kappa_1}R(\zeta/u_x)$.
\end{pro}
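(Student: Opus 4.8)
The goal is to verify that the Lax pair~\eqref{abc_2_noniso} is indeed a nonisospectral Lax pair for~\eqref{abc_2}, and that it is obtained from~\eqref{abc_noniso_lax-0} by the stated substitution $\Phi=\Psi$, $\lambda=u_x^{-(\kappa_1+\kappa_2-1)/\kappa_1}R(\zeta/u_x)$. Since Corollary~\ref{int} already establishes that~\eqref{abc_noniso_lax-0} is a valid nonisospectral Lax pair, the cleanest strategy is to \emph{derive} \eqref{abc_2_noniso} from \eqref{abc_noniso_lax-0} via the change of spectral variable, rather than to verify its compatibility from scratch. The plan is therefore to treat the substitution as a change of independent variable $\zeta\mapsto\lambda$ in the auxiliary linear problem, express the old derivatives $\Phi_x,\Phi_\zeta$ in terms of the new ones $\Psi_x,\Psi_\lambda$ by the chain rule, and show that \eqref{abc_noniso_lax-0} goes over into \eqref{abc_2_noniso}.

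The key computational step is the change of variables. First I would record the defining relation $\lambda=u_x^{-(\kappa_1+\kappa_2-1)/\kappa_1}R(\zeta/u_x)$, noting that here $x$ enters both explicitly through the prefactor $u_x^{-(\kappa_1+\kappa_2-1)/\kappa_1}$ and through the argument $s=\zeta/u_x$ of $R$, while $\zeta$ enters only through that argument. Since $\Phi=\Psi$ and we pass from the pair of independent variables $(x,\zeta)$ to $(x,\lambda)$ (with $y,t$ held fixed and entering only as parameters), the chain rule gives
\begin{equation*}
\Phi_\zeta=\lambda_\zeta\,\Psi_\lambda,\qquad
\Phi_x=\Psi_x+\lambda_x\,\Psi_\lambda,
\end{equation*}
where $\lambda_\zeta$ and $\lambda_x$ are computed from the defining relation, using the ODE~\eqref{abc_2_ode} to eliminate $R''$ whenever it appears. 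Substituting these into \eqref{abc_noniso_lax-0} and replacing every occurrence of $R,R',s$ in the coefficients $H_1,H_2$ by their expressions in terms of $\lambda$ and $u_x$ should, after simplification, reproduce exactly the right-hand sides of~\eqref{abc_2_noniso}. In parallel one must check that this reduction is consistent with the intermediate covering~\eqref{abc_2_w_covering}: setting $w=r\,u_x^{-(\kappa_1+\kappa_2-1)/\kappa_1}$ and identifying $w$ with the spectral variable $\lambda$, the coefficients of~\eqref{abc_2_w_covering} are precisely the nonlinear counterparts of those in~\eqref{abc_2_noniso}, so a cross-check against~\eqref{abc_2_w_covering} confirms the linearized form.

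The main obstacle I anticipate is bookkeeping rather than conceptual difficulty: the coefficients $H_1,H_2$ contain $R,R',R''$ in various combinations, and one must repeatedly invoke~\eqref{abc_2_ode} to express $R''$ through $R,R'$ and then eliminate $R,R'$ altogether in favor of $\lambda$ and the jet variables $u_x,u_{xx},u_{xt},u_{xy}$. Keeping track of the fractional powers of $u_x$ and the factors of $\kappa_1,\kappa_2,\kappa_1+\kappa_2-1$ through this elimination is where errors are most likely to creep in, so I would organize the computation by first establishing a small table of the needed derivatives $\lambda_x,\lambda_\zeta$ and the inverse relation $R'=R'(\lambda,u_x)$, verify each once, and then assemble the final coefficients mechanically. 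Because the transformation is invertible and \eqref{abc_noniso_lax-0} is already known to be compatible on~\eqref{abc_2}, no separate compatibility verification for~\eqref{abc_2_noniso} is needed: the nonisospectral Lax property is inherited automatically under the change of spectral variable, which is the one structural point worth stating explicitly.
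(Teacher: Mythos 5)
Your overall route coincides with the one the paper intends: the paper gives no separate proof of Proposition~\ref{lax} beyond the remark that ``a similar change of variables in~\eqref{abc_noniso_lax-0} leads to a significantly simpler Lax pair'' (plus the observation that \eqref{abc_2_noniso} is the Pavlov eversion of~\eqref{abc_2_w_covering}), and your cross-check of \eqref{abc_2_noniso} against \eqref{abc_2_w_covering} via $w\mapsto\lambda$ is a correct and useful confirmation. Your structural remark that compatibility is inherited under an invertible change of the extended independent variables is also sound, provided one notes $\lambda_\zeta=u_x^{-\beta-1}R'\neq0$, $\beta=(\kappa_1+\kappa_2-1)/\kappa_1$. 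However, there is a genuine error at the decisive computational step: $y$ and $t$ do \emph{not} enter the change of spectral variable merely as parameters. Since $\lambda=u_x^{-\beta}R(\zeta/u_x)$ depends on $u_x(x,y,t)$, one has, with $s=\zeta/u_x$,
\begin{equation*}
\lambda_y=-u_x^{-\beta-1}u_{xy}\left(\beta R+sR'\right),\qquad
\lambda_t=-u_x^{-\beta-1}u_{xt}\left(\beta R+sR'\right),
\end{equation*}
which are nonzero generically, so the left-hand sides of \eqref{abc_noniso_lax-0} transform as $\Phi_y=\Psi_y+\lambda_y\Psi_\lambda$ and $\Phi_t=\Psi_t+\lambda_t\Psi_\lambda$, not as $\Phi_y=\Psi_y$, $\Phi_t=\Psi_t$ as your two-relation chain rule implicitly assumes.

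This omission is not absorbed by bookkeeping: in the transformed $y$-equation the coefficient of $\Psi_\lambda$ is $(H_1)_\zeta\lambda_x-(H_1)_x\lambda_\zeta-\lambda_y$, and the $u_{xy}$-dependent contributions of $-(H_1)_x\lambda_\zeta$ (after eliminating $u_{xy}$ via~\eqref{abc_2}) cancel precisely against $-\lambda_y$; the target coefficient in~\eqref{abc_2_noniso} contains $u_{xx}$ but no $u_{xy}$ or $u_{ty}$, so dropping $\lambda_y$ (and likewise $\lambda_t$ in the $t$-equation) leaves spurious uncancelled second-derivative terms and the computation cannot reproduce~\eqref{abc_2_noniso}. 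The repair is immediate --- use all four chain-rule relations --- after which your plan (eliminate $R''$ via~\eqref{abc_2_ode}, substitute $R=u_x^{\beta}\lambda$, and verify that $s$ and $R'$ drop out of the final coefficients, the latter being a nontrivial cancellation worth stating) goes through and agrees with the eversion check.
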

Note that \eqref{abc_2_noniso} can be obtained
from~\eqref{abc_2_w_covering} via the so-called Pavlov eversion, see
\cite[Section~2]{p} and \cite{fk, k} for details on the latter.

In stark contrast with (\ref{abc_noniso_lax-0}), the variable spectral parameter $\lambda$ enters
(\ref{abc_2_noniso}) rationally. This enables us to construct an infinite sequence of conservation laws
from (\ref{abc_2_noniso}) using a formal Taylor expansion for $\Psi$ with respect to $\lambda$ in the fashion outlined below.

\section{Nonlocal conservation laws}\label{cl}
Substituting into~\eqref{abc_2_noniso} a formal expansion
$\Psi=\sum\limits_{j=-\infty}^\infty \psi_j\lambda^j$ yields the 
equations
\begin{equation}
  \hspace*{-5.7mm}
    \begin{array}{lcl}
      (\psi_j)_y &=& \displaystyle{\frac{\kappa_2
                     u_x^{(\kappa_2-\kappa_1-1)/\kappa_1}  u_y(-\kappa_1 u_x
                     (\psi_{j-1})_x+ (\kappa_1+\kappa_2-1)(j-1)
                     u_{xx}\psi_{j-1})}{\kappa_1^2
                     (\kappa_1+\kappa_2-1)}
                     -\frac{u_x^{(\kappa_1+\kappa_2-1)/\kappa_1}(\psi_{j-1})_y}{(\kappa_1+\kappa_2-1)}},
      \\[5mm]
      (\psi_j)_t &=&\displaystyle{\frac{u_x^{(\kappa_2-\kappa_1-1)/\kappa_1}
                     (-\kappa_1 u_x u_t (\psi
                     _{j-1})_x+(\kappa_1+\kappa_2-1)(j-1) (u_t u_{xx}-\kappa_1
                     u_x u_{xt})\psi_{j-1})}{\kappa_1^2(\kappa_1+\kappa_2-1)}}
    \end{array}
  \label{abc_2_psi}
\end{equation}
for $j\in\mathbb{Z}$.

However, the covering over (\ref{abc_2}) defined by (\ref{abc_2_psi}) for $j\in\mathbb{Z}$ is pretty much intractable, and there appears
to be no way to extract from it any reasonably simple conservation laws for (\ref{abc_2}).

Fortunately, the situation improves dramatically when we truncate the expansion for $\Psi$. One natural possibility to do this is to pass from the Laurent expansion for $\Psi$ to the Taylor one, i.e., to
assume that $\psi_j=0$ for $j<0$. The substitution of $\Psi=\sum\limits_{j=0}^\infty \psi_j\lambda^j$ into (\ref{abc_2_noniso})
yields (\ref{abc_2_psi}) for $j=1,2,3,\dots$, and the equations
\begin{equation}
  (\psi_0)_t=0,\quad (\psi_0)_y=0.\label{psi0}
\end{equation}

System~\eqref{abc_2_psi} for $j=1,2,3,\dots$ together with
equations~\eqref{psi0} defines an infinite-dimensional covering
over~\eqref{abc_2} and yields, cf.\ e.g.\ \cite{bkmv, bkmv2, ks, as} and
references therein, an infinite sequence of (in general nonlocal)
two-component conservation laws for~\eqref{abc_2} of the form
\begin{equation}
  \hspace*{-16mm}
  \begin{array}{lcl}
    \tilde{D}_y\left(\displaystyle{\frac{u_x^{(\kappa_2-\kappa_1-1)/\kappa_1}
    (-\kappa_1 u_x u_t (\psi_{j-1})_x+(\kappa_1+\kappa_2-1)(j-1) (u_t
    u_{xx}-\kappa_1 u_x
    u_{xt})\psi_{j-1})}{\kappa_1^2(\kappa_1+\kappa_2-1)}}\right)=\\[7mm]
    \tilde{D}_t\left(\displaystyle{\frac{\kappa_2
    u_x^{(\kappa_2-\kappa_1-1)/\kappa_1}  u_y(-\kappa_1 u_x (\psi_{j-1})_x+
    (\kappa_1+\kappa_2-1)(j-1) u_{xx}\psi_{j-1})}{\kappa_1^2
    (\kappa_1+\kappa_2-1)}-
    \frac{u_x^{(\kappa_1+\kappa_2-1)/\kappa_1}(\psi_{j-1})_y}{(\kappa_1+\kappa_2-1)}}\right)
  \end{array}
  \label{abc_2_cl}
\end{equation}
for $j=1,2,3\dots$. The operators $\tilde{D}_y$ and $\tilde{D}_t$ denote here
the total $y$- and $t$-derivatives in the covering~\eqref{abc_2_psi}, cf.\
e.g.\ \cite{b, kv} and the discussion after Lemma~\ref{l2} below for details.
The nonlocal variables and the associated conservation laws are constructed
recursively.

The simplest choice $\psi_0=0$, and even a more general choice
$\psi_0=\mathrm{const}$, yields by virtue of~\eqref{abc_2_psi}
\[
(\psi_1)_t=0,\quad (\psi_1)_y=0.
\]
If we now choose $\psi_1=1$, so $\Psi=\lambda+\sum\limits_{j=2}^\infty\psi_j\lambda^j$, then we have
\begin{equation}
  \hspace*{-25mm}
    \begin{array}{lcl}
      (\psi_2)_y &=& \displaystyle{\frac{\kappa_2
                     u_x^{(\kappa_2-\kappa_1-1)/\kappa_1}  u_y
                     u_{xx}}{\kappa_1^2}},
      \\[7mm]
      (\psi_2)_t&=& \displaystyle{\frac{u_x^{(\kappa_2-\kappa_1-1)/\kappa_1}
                    (u_t u_{xx}-\kappa_1 u_x u_{xt})}{\kappa_1^2}},
    \end{array}
  \label{abc_2_psi2}
\end{equation}
which gives rise to the first nontrivial conservation law in the sequence (\ref{abc_2_cl}),
\begin{equation}\label{abc_eq_lcl}
  D_y\left(\displaystyle{\frac{u_x^{(\kappa_2-\kappa_1-1)/\kappa_1} (u_t
        u_{xx}-\kappa_1 u_x
        u_{xt})}{\kappa_1^2}}\right)=D_t\left(\displaystyle{\frac{\kappa_2
        u_x^{(\kappa_2-\kappa_1-1)/\kappa_1}  u_y
        u_{xx}}{\kappa_1^2}}\right).
\end{equation}
This conservation law is local but the conservation laws~\eqref{abc_2_cl} for $j=3,4,\dots$
will be
nonlocal:

\begin{pro}\label{p5} The $ABC$ equation~\eqref{abc_2} has an infinite sequence
  of nontrivial nonlocal linearly independent conservation
  laws~\eqref{abc_2_cl} for $j=3,4,\dots$\textup{,} where $\psi_0=0$\textup{,}
  $\psi_1=1$\textup{,} the nonlocal variable $\psi_2$ is defined
  by~\eqref{abc_2_psi2}\textup{,} and the nonlocal variables
  $\psi_j$\textup{,} $j=3,4,\dots$\textup{,} are defined recursively
  via~\eqref{abc_2_psi}.
\end{pro}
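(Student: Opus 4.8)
The plan is to separate the two assertions and to reduce everything to the structure of $K:=\ker\tilde D_t\cap\ker\tilde D_y$ on the covering. Recall that \eqref{abc_2_psi} adjoins the nonlocal variables $\psi_j$ together with their $x$-derivatives $\psi_{j,x^k}:=\tilde D_x^k\psi_j$ (only the $t$- and $y$-derivatives being prescribed), giving a filtration $\mathcal{E}\subset\tilde{\mathcal{E}}^{(2)}\subset\tilde{\mathcal{E}}^{(3)}\subset\cdots$ in which $\tilde{\mathcal{E}}^{(N)}$ carries $\psi_2,\dots,\psi_N$. By construction $\omega_N$ is the conservation law $\tilde D_y((\psi_N)_t)=\tilde D_t((\psi_N)_y)$, already defined on $\tilde{\mathcal{E}}^{(N-1)}$, with $\psi_N$ as its potential. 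Hence $\omega_N$ is trivial exactly when there is a function $F$ on $\tilde{\mathcal{E}}^{(N-1)}$ (not involving $\psi_N$) with $(\psi_N)_t=\tilde D_t F$ and $(\psi_N)_y=\tilde D_y F$, i.e.\ when $\psi_N-F\in K$. So the proof hinges on understanding $K$.

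The key step is the following claim, which I call the \emph{Main Lemma}: every element of $K$ is independent of all the nonlocal variables $\psi_{j,x^k}$ (in fact $K=\{g(x)\}$). Granting it, $\psi_N-F\in K$ is impossible, since $\partial(\psi_N-F)/\partial\psi_N=1$, and each $\omega_N$ is nontrivial. To prove the Main Lemma I would first dispose of the base equation $\mathcal{E}$ by a highest-derivative argument: for a density depending on a pure $t$- (resp.\ $x^{a}t^{c}$-, $y^{b}t^{c}$-) jet coordinate of maximal order, $D_t$ produces the corresponding order-raised coordinate, which occurs nowhere else and whose coefficient must therefore vanish; iterating forces $\ker D_t\cap\ker D_y=\{g(x)\}$. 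For the nonlocal levels I would induct on $N$, exploiting the crucial triangular feature of \eqref{abc_2_psi}: both $\tilde D_t$ and $\tilde D_y$ \emph{lower} the nonlocal index, because $\tilde D_t\psi_{N,x^k}=\tilde D_x^k((\psi_N)_t)$ depends only on the $\psi_{N-1}$-variables. Consequently the top nonlocal variable of any $g$ enters $\tilde D_t g$ only through the coefficients $\partial g/\partial(\cdot)$, so that $\partial/\partial\psi_{N,x^k}$ commutes with $\tilde D_t,\tilde D_y$ on $\tilde{\mathcal{E}}^{(N)}$ and $\partial g/\partial\psi_{N,x^k}\in K$ again; this allows a descent on the degree in the top variable.

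The main obstacle is to show that this descent cannot terminate at a genuinely $\psi$-dependent element, i.e.\ that the leading nonlocal term cannot cancel. Differentiating a hypothetical relation $(\psi_N)_t=\tilde D_t F$ with respect to the highest variable $\psi_{N-1,x}$ collapses, for every $N\ge 3$, to one and the same explicit local obstruction: a function $p$ with $\tilde D_t p=c_t$ and $\tilde D_y p=c_y$, where $c_t=-u_x^{(\kappa_2-1)/\kappa_1}u_t/(\kappa_1(\kappa_1+\kappa_2-1))$ and $c_y=-\kappa_2 u_x^{(\kappa_2-1)/\kappa_1}u_y/(\kappa_1(\kappa_1+\kappa_2-1))$ (one verifies $\tilde D_y c_t=\tilde D_t c_y$ via \eqref{abc_2}, so the datum is consistent). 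I would then exclude $p$ by a leading-term analysis: a highest-jet-order argument forbids any local $p$, since matching the $u_t$-term of $c_t$ would force $p_u$ to depend on $u_x$; and the index-lowering property, together with the homogeneity introduced below, forbids nonlocal $p$, because the coefficient of the top $\psi$-monomial in $\tilde D_t p$ is nonzero and cannot be matched by the $\psi$-free $c_t$. This is precisely the delicate examination of $\ker\tilde D_t\cap\ker\tilde D_y$ promised in the introduction, and is where essentially all the work lies.

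Linear independence I would then obtain cheaply from a grading. The scaling $u\mapsto\mathrm{e}^{\varepsilon}u$ is a symmetry of \eqref{abc_2} and, by the homogeneity of \eqref{abc_2_psi}, lifts to the covering with $\psi_j$ of weight $(j-1)\delta$, $\delta=(\kappa_1+\kappa_2-1)/\kappa_1\neq 0$; hence $\omega_j$ is homogeneous of weight $(j-1)\delta$, and these weights are pairwise distinct. Since the total derivatives and the space of trivial conservation laws are invariant under this symmetry, the latter is graded, so any trivial combination $\sum_j c_j\omega_j$ splits into its homogeneous pieces $c_j\omega_j$, each of which must be trivial; by the nontriviality established above, every $c_j=0$. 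The same bookkeeping confirms that the $\omega_j$ are genuinely nonlocal for $j\ge 3$, the local density \eqref{abc_eq_lcl} arising only at $j=2$, where the index cannot be lowered further because $\psi_1$ is constant. Throughout, the hard part is the kernel control of the preceding paragraph.
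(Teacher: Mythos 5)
You have correctly identified the crux — everything reduces to showing that $K=\ker\tilde D_t\cap\ker\tilde D_y$ in the covering consists of functions of $x$ alone, which is exactly the paper's strategy (its Lemma~\ref{sec:nonl-cons-laws-2}, after~\cite{IntInCov}, plus Lemma~\ref{sec:nonl-cons-laws-1}) — but the step where you claim to kill the descent fails for special parameter values. Your obstruction is a function $p$ with $\tilde D_t p=c_t$, $\tilde D_y p=c_y$, where $c_t=-u_x^{(\kappa_2-1)/\kappa_1}u_t/(\kappa_1(\kappa_1+\kappa_2-1))$, $c_y=-\kappa_2 u_x^{(\kappa_2-1)/\kappa_1}u_y/(\kappa_1(\kappa_1+\kappa_2-1))$, and you exclude it on the grounds that matching the $u_t$-term would force $p_u$ to depend on $u_x$. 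This is exactly wrong when the exponent $(\kappa_2-1)/\kappa_1$ vanishes: for $\kappa_2=1$ (which is admissible, since the only constraints are $\kappa_1,\kappa_2\neq0$ and $\kappa_1+\kappa_2\neq1$) one has $c_t=-u_t/\kappa_1^2$, $c_y=-u_y/\kappa_1^2$, and the \emph{local} function $p=-u/\kappa_1^2$ solves both equations. So your necessary condition is satisfied there, no contradiction arises, and nontriviality for the whole one-parameter family $\kappa_2=1$ is simply not established; you would have to set $F=p\,\psi_{N-1,1}+G$ and push the obstruction analysis to the next order. The paper's proof is immune to this degeneration because it never integrates a first-order obstruction: it works with the leading-term estimates \eqref{eq:1}--\eqref{eq:2}, in which $\tilde D_x^k(T_j)$ contains $v_{0,k+2}$ and $\psi_{j-1,k+1}$ linearly with coefficients that are nonzero for \emph{all} admissible $\kappa_1,\kappa_2$, and then runs a double induction (on the top nonlocal level $j$ and on $k$), killing dependence on $\psi_{j,k_j}$ via the highest local derivative $v_{0,k_2+2}$, which cannot occur in $f$.

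Two secondary points. First, your descent ``on the degree in the top variable'' via the commutation $[\partial/\partial\psi_{N,k},\tilde D_t]=0$ (which is correct) only terminates for polynomial dependence on the nonlocal variables; for merely smooth $f$ the partial derivatives $\partial f/\partial\psi_{N,k}$ still live at level $N$ and the induction does not close — again the paper's linear-occurrence-of-$v_{0,k_2+2}$ argument is what handles arbitrary smooth dependence. Second, your characterization ``$\omega_N$ trivial iff $(\psi_N)_t=\tilde D_tF$, $(\psi_N)_y=\tilde D_yF$ for $F$ at level $N-1$'' omits the $\tilde D_x$-exact corrections: triviality of the two-form only gives $T_N=\tilde D_tB-\tilde D_xC$ and its companion, and it is precisely to absorb this that the paper works with the whole family $\omega_\sigma=D_\sigma(\omega)$, closed under $D_x$, and invokes the kernel criterion of Lemma~\ref{sec:nonl-cons-laws-2}. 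On the positive side, your homogeneity grading (weights $(j-1)(\kappa_1+\kappa_2-1)/\kappa_1$, pairwise distinct) is a legitimate and economical alternative route to \emph{linear independence}, which the paper instead extracts directly from the same kernel criterion; but it only operates once nontriviality is secured, so it inherits the gap above.
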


\begin{proof}


  We begin with a general construction. Let~$\mathcal{E}$ be a differentially
  connected equation and suppose that
  $\omega = A_1\,dx^1\wedge\,dx^3\wedge\dots\wedge\,dx^n +
  A_2\,dx^2\wedge\,dx^3\wedge\dots\wedge\,dx^n$
  is a nontrivial two-component conservation law on $\tilde{\mathcal{E}}$,
  i.e.,
  \begin{equation*}
    D_{x^2}(A_1)=D_{x^1}(A_2),
  \end{equation*}
  where $D_{x^i}$ are the total derivatives on~$\mathcal{E}$. Consider the
  covering $\tau_\omega\colon \mathcal{E}_\omega \to \mathcal{E}$ naturally
  associated to~$\omega$. This covering contains nonlocal
  variables~$\psi_\sigma$, where $\sigma$ is a multi-index whose components
  take values in the set $\{3,\dots,n\}$, that satisfy the defining equations
  \begin{equation}\label{eq:7}
    \begin{array}{rcl}
    (\psi_\sigma)_{x^1}&=&D_\sigma(A_1),\\
    (\psi_\sigma)_{x^2}&=&D_\sigma(A_2),\\
    (\psi_\sigma)_{x^i}&=&\psi_{\sigma i},
    \end{array}
  \end{equation}
  where $i=3,\dots,n$.

  Thus, infinitely many two-component conservation laws
  of the from
  \begin{equation*}
    \omega_\sigma=D_\sigma(A_1)\,dx^1\wedge\,dx^3\wedge\dots\wedge\,dx^n +
    D_\sigma(A_2)\,dx^2\wedge\,dx^3\wedge\dots\wedge\,dx^n
  \end{equation*}
  arise on $\mathcal{E}$.

  A straightforward generalization of the results proved in~\cite{IntInCov} is
  the following

  \begin{lemma}\label{sec:nonl-cons-laws-2}
    If the equation~$\mathcal{E}$ is differentially connected then the
    conservation laws $\omega_\sigma$ are linearly independent if and only if
    the only solutions of the system
    \begin{equation*}
      \tilde{D}_{x^1}(f)=\tilde{D}_{x^2}(f)=0
    \end{equation*}
    are functions of $x^3,\dots,x^n$\textup{,} where~$\tilde{D}_{x^i}$ are the
    total derivatives on~$\mathcal{E}_\omega$.
  \end{lemma}

  From equations~\eqref{eq:7} it also immediately follows that under the
  assumptions of Lemma~\ref{sec:nonl-cons-laws-2} the covering~$\tau_\omega$
  is irreducible.

  Now return to the $ABC$ equation~\eqref{abc}. Using the
  presentation~\eqref{abc_2} introduce the internal variables
  \begin{equation}\label{i-v}
    u_k=\frac{\partial^ku}{\partial t^k},\qquad
    v_{k,l}=\frac{\partial^{k+l}u}{\partial t^k\partial x^l},\qquad
    w_{k,l}=\frac{\partial^{k+l}u}{\partial t^k\partial y^l},
  \end{equation}
  where $k\geq 0$, $l\geq 1$. The total derivatives in these coordinates take
  the form
  \begin{align*}
    D_t&=\frac{\partial}{\partial t} +\sum_k u_{k+1}\frac{\partial}{\partial
         u_k} + \sum_{k,l}\left(v_{k+1,l}\frac{\partial}{\partial v_{k,l}} +
         w_{k+1,l}\frac{\partial}{\partial w_{k,l}}\right),\\
    D_x&=\frac{\partial}{\partial x} +\sum_k v_{k,1}\frac{\partial}{\partial
         u_k} +
         \sum_{k,l}\left(v_{k,l+1}\frac{\partial}{\partial v_{k,l}} +
         D_t^kD_y^{l-1}(\Upsilon)\frac{\partial}{\partial w_{k,l}}\right),\\
    D_y&=\frac{\partial}{\partial y} +\sum_kw_{k,1}\frac{\partial}{\partial u_k}
         + \sum_{k,l}\left(D_t^kD_x^{l-1}(\Upsilon)\frac{\partial}{\partial
         v_{k,l}} + w_{k,l+1}\frac{\partial}{\partial w_{k,l}}\right),
  \end{align*}
  where $\Upsilon$ denotes the right-hand side of~\eqref{abc_2}.

  From the above formulas we immediately obtain
  \begin{lemma}\label{l2}
    The $ABC$ equation~\eqref{abc_2} is differentially connected.
  \end{lemma}
 Now denote by $Y_k$ the right-hand sides of first equations in
  systems~\eqref{abc_2_psi} and~\eqref{abc_2_psi2} and by $T_k$ the right-hand
  sides of second equations in these systems, $k\geq 2$. Then we have
  \begin{equation*}
    \tilde{D}_y=
    D_y+\sum_{j\geq2}\sum_{k\geq0}\tilde{D}_x^k(Y_j)\frac{\partial}{\partial
      \psi_{j,k}},\quad
        \tilde{D}_t=
    D_t+\sum_{j\geq2}\sum_{k\geq0}\tilde{D}_x^k(T_j)\frac{\partial}{\partial
      \psi_{j,k}},
  \end{equation*}
  where $\psi_{j,0}=\psi_j$ and $\psi_{j,k+1}=(\psi_{j,k})_x$.

  The subsequent lemma is proved by direct computations using the
  expressions~\eqref{abc_2_psi2} and~\eqref{abc_2_psi} for $T_j$:
  \begin{lemma}\label{sec:nonl-cons-laws}
    The following estimates hold\textup{:}
    \begin{equation}
      \label{eq:1}
      \tilde{D}_x^k(T_2)=\frac{v_{0,1}^\alpha}{\kappa_1^2}
      (u_1v_{0,k+2}-\kappa_1v_{0,1}v_{1,k+1}) +o(2,k)
    \end{equation}
    and
    \begin{equation}
      \label{eq:2}
      \tilde{D}_x^k(T_j)=(j-1)\frac{v_{0,1}^\alpha}{\kappa_1^2}
      (u_1v_{0,k+2}-\kappa_1v_{0,1}v_{1,k+1})\psi_{j-1,0}
      -\frac{v_{0,1}^{\alpha+1}u_1}{\kappa_1(\kappa_1
        +\kappa_2-1)}\psi_{j-1,k+1} + o(j,k),\quad j>2,
    \end{equation}
    where 
    $\alpha=(\kappa_2-\kappa_1-1)/\kappa_1$\textup{,} and $o(j,k)$ is a
    function that does not depend on the variables $\psi_{s,l}$ for $s>j-1$,
    $\psi_{j-1,l}$ for $l>k+1$\textup{,} and $v_{0,l+2}$\textup{,} $v_{1,l+1}$
    for~$l>k$.
  \end{lemma}

 We are now ready to establish a stronger result:
  \begin{lemma}
    \label{sec:nonl-cons-laws-1}
    The only solutions of the equation
    \begin{equation}\label{eq:3}
      \tilde{D}_t(f)=0
    \end{equation}
    are functions of~$x$ and~$y$.
  \end{lemma}
  \begin{proof}[Proof of Lemma~\ref{sec:nonl-cons-laws-1}]
    Suppose that
    \begin{equation}\label{eq:4}
      f=f(x,y,t,\dots,\psi_{2,0},\dots,\psi_{2,k_2},\dots,\psi_{j,0},\dots,
      \psi_{j,k_j})
    \end{equation}
    is a solution of equation~\eqref{eq:3}. Let us stress that in addition to $x,y,t$ and $\psi_{j,k}$ the
    function $f$ is allowed to depend on finitely many internal
    variables~\eqref{i-v} on the $ABC$ equation.

    Now proceed by induction on~$j$.

    \textbf{Base of induction:} $j=2$. Let
    $f=f(x,y,t,\dots,\psi_{2,0},\dots,\psi_{2,k_2})$. Then from~\eqref{eq:1}
    one has
    \begin{equation*}
      \tilde{D}_t(f) +
      \frac{v_{0,1}^\alpha}{\kappa_1^2}
      \sum_{k=0}^{k_2}\big((u_1v_{0,k+2}-\kappa_1v_{0,1}v_{1,k+1})
      +o(2,k)\big)\frac{\partial f}{\partial
        \psi_{2,k}} = 0.
    \end{equation*}
    We now perform induction on $k_2$ and show that $f$ cannot depend on
    nonlocal variables $\psi_{2,0},\dots,\psi_{2,k_2}$. Indeed, for $k_2=0$
    one has
    \begin{equation*}
      \tilde{D}_t(f) +
      \frac{v_{0,1}^\alpha}{\kappa_1^2}
      \big((u_1v_{0,2}-\kappa_1v_{0,1}v_{1,1})
      +o(2,0)\big)\frac{\partial f}{\partial
        \psi_{2,0}} = 0.
    \end{equation*}
    But from the structure of the operator $\tilde{D}_t$ it immediately follows that
    $f$ is independent of~$v_{0,2}$, and
    thus~$\partial f/\partial\psi_{2,0}=0$; hence the claim holds true. If
    $k_2>0$ then for similar reasons $f$ cannot depend on $v_{0,k_2+2}$ and
    consequently $f=f(x,y,t,\dots,\psi_{2,0},\dots,\psi_{2,k_2-1})$.
    However, this form of $f$ contradicts our initial assumption that
    $\partial f/\partial \psi_{2,k_2}\neq 0$, and hence $f$ is local.

    \textbf{Induction step:} $j>2$. Let $f$ be of the form~\eqref{eq:4}. Then
    from the estimate~\eqref{eq:2} it follows that~$k_{j-1}>k_j$; otherwise
    $f$ will be independent of~$\psi_{j,k_j}$. Repeating this argument, we obtain
    \begin{equation*}
      k_2>k_3>\dots>k_{j-1}>k_j.
    \end{equation*}
    Using now the estimate~\eqref{eq:1}, we see that the coefficient
    at~$\partial/\partial\psi_{2,k_2}$ linearly depends on~$v_{0,k_2+2}$,
    where~$k_2+2\geq k_j+j$. But this dependence is impossible for the reasons
    similar to those used in the proof of the case~$j=2$. Consequently, $f$ is
    independent of all~$\psi_{j,k}$, $k=0,\dots,k_j$, and we arrive at the
    situation of the induction hypothesis, which completes the proof of
    Lemma~\ref{sec:nonl-cons-laws-1}.
  \end{proof}
  From Lemma~\ref{sec:nonl-cons-laws-1} it now follows that the only solutions
  of the equation~$\tilde{D}_t(f) = \tilde{D}_y(f) = 0$ are functions of~$x$ alone,
  and the result of Proposition~\ref{p5} now readily follows from
  Lemma~\ref{sec:nonl-cons-laws-2}.
\end{proof}

\section{Closing remarks}

In the present paper we have constructed an infinite hierarchy of nonlocal
conservation laws (\ref{abc_2_cl}) for the $ABC$ equation (\ref{abc_2}) and
proved the nontriviality of those. An interesting byproduct of our work is the change of spectral
parameter which simplifies the original nonisospectral Lax pair
(\ref{abc_noniso_lax-0}) down to (\ref{abc_2_noniso}).\looseness=-1 

Let us reiterate that the method of proving nontriviality of the nonlocal
conservation laws in question is quite general and can be applied {\em mutatis mutandis}
to many other multidimensional integrable systems.

In addition to being an important integrability attribute {\em per se}, the
conservation laws found in our paper can be employed, by means of the associated potentials $\psi_j$, $j=2,3,\dots$, 
for the
construction of nonlocal symmetries, nonlocal cosymmetries and further nonlocal conservation laws 
for the equation under study, cf., for example, \cite{b, cb, k-v1, kv} and references therein. 

On the other hand, just 
as the local conservation laws, the nonlocal ones give rise, once we perform a suitable change of independent 
variables and then rewrite our equation as an evolutionary system, 
to nonlocal integrals of motion, cf.\ e.g.\  \cite{cb, o}. On the more speculative side, 
perhaps it could have been possible to apply the nonlocal conservation laws for the construction 
of exact solutions of the equation under study using a suitably adapted version of the
method of conservation laws \cite{i}.\looseness=-1 

In closing note that in addition to the hierarchy of nonlocal conservation
laws (\ref{abc_2_cl}), all of which are two-component, the $ABC$ equation
(\ref{abc_2}) also admits {\em inter alia} a three-component conservation
law of the form \cite{f}
\begin{equation}\label{cl3}
D_x\left((\kappa_1-\kappa_2+1)u_y u_t\right) + D_y\left((1-\kappa_1+\kappa_2) u_x u_t\right) -
D_t\left((\kappa_1+\kappa_2+1) u_x u_y\right)= 0.
\end{equation}
This shows, in particular, that the set of conservation laws (\ref{abc_eq_lcl}), (\ref{abc_2_cl}) for (\ref{abc_2})
is by no means complete.

In fact, acting on the conservation law (\ref{cl3}) by appropriately chosen (nonlocal) symmetries, or even by shadows,
is likely to yield many more (nonlocal) three-component conservation laws for (\ref{abc_2}).
More broadly, finding non-two-component nonlocal conservation laws for
multidimensional integrable systems is an interesting problem on its own right which is,
however, beyond the scope of the present paper (cf.\ also the discussion in \cite{cb} and references therein
for the systems which are not necessarily integrable).\looseness=-1

\subsection*{Acknowledgments}
The research of AS was supported in part by the Grant Agency of the Czech
Republic (GA \v{C}R) under grant P201/12/G028 and by the Ministry of Education,
Youth and Sports of the Czech Republic (M\v{S}MT \v{CR}) under RVO funding for
I\v{C}47813059. The work of ISK was partially supported by the Simons-IUM
fellowship. OIM gratefully acknowledges financial support from the Polish Ministry of Science and Higher Education. 
AS is pleased to thank E.V.\ Ferapontov and R.O. Popovych for stimulating
discussions.\looseness=-1

This research was initiated in the course of visits of OIM to Silesian
University in Opava and of AS to the AGH University of Science and Technology. The
authors thank the universities in question for warm hospitality extended to
them.

The authors thank the editor and the anonymous referee for useful suggestions.

\end{document}